\newcommand{\String}{S}
\DeclareMathOperator{\Oh}{\mathcal{O}}
\theoremstyle{plain}
\newtheorem{theorem}{Theorem}
\newtheorem{lemma}[theorem]{Lemma}
\theoremstyle{definition}
\newtheorem{definition}[theorem]{Definition}
\newtheorem{example}{Example}
\theoremstyle{remark}
\begin{document}

\begin{frontmatter}

\title{Finding Maximal Closed Substrings}

\author{Golnaz Badkobeh} 
\ead{golnaz.badkobeh@city.ac.uk}
\address{City, University of London, London, UK}

\author{Alessandro De Luca} 
\ead{alessandro.deluca@unina.it}
\address{DIETI, Universit\`a degli Studi di Napoli Federico II, Italy}

\author{Gabriele Fici} 
\ead{gabriele.fici@unipa.it}
\address{Dipartimento di Matematica e Informatica, Universit\`a di Palermo, Italy}

\author{Simon J. Puglisi} 
\ead{simon.puglisi@helsinki.fi}
\address{
Helsinki Institute for Information Technology,\\
Department of Computer Science, University of Helsinki, Helsinki, Finland}

\sloppy  
  
\begin{abstract}
A string is closed if it has length 1 or has a nonempty border without internal occurrences. In this paper we introduce the definition of a \emph{maximal closed substring} (MCS), which is an occurrence of a closed substring that cannot be extended to the left nor to the right into a longer closed substring. MCSs with exponent at least $2$ are commonly called \emph{runs}; those with exponent smaller than $2$, instead, are particular cases of \emph{maximal gapped repeats}. We  provide an 
 algorithm that, given a string of length $n$ locates all MCSs the string contains in 
$\mathcal O(n\log n)$ time.
\end{abstract}

\begin{keyword}  Closed word, Maximal repetition, Gapped repeat, Combinatorial algorithms, Combinatorics on Words.
\end{keyword}

\end{frontmatter}


\section{Introduction}

A string is \emph{closed} (or \emph{periodic-like}~\cite{DBLP:journals/acta/CarpiL01}) if it has length $1$ or it has a border that does not have internal occurrences (i.e., it occurs only as a prefix and as a suffix). Otherwise, the string is \emph{open}. For example, the strings $a$, $abaab$ and $ababa$ are closed, while $ab$ and $ababaab$ are open. In particular, every string whose exponent -- the ratio between the length and the minimal period -- is at least $2$, is closed~\cite{BaFiLi15}. The distinction between open and closed strings was introduced by the third author in~\cite{Fici17} in the context of Sturmian words. 

In this paper, we consider occurrences of closed substrings in a string, with the property that the substring cannot be extended to the left nor to the right into another closed substring. We call these the \emph{maximal closed substrings} (MCS) of the string. For example, if $\String=abaababa$, then the set of pairs of starting and ending positions of the MCSs of $\String$ is
\[\{(1,1),(1,3),(1,6),(2,2),(3,4),(4,8),(5,5),(6,6),(7,7),(8,8)\}.\]
This notion encompasses that of a \emph{run} (maximal repetition) which is an MCS with exponent $2$ or larger. It has been conjectured by Kolpakov and Kucherov~\cite{DBLP:conf/focs/KolpakovK99} and then finally proved, after a long series of papers, by Bannai et al.~\cite{DBLP:journals/siamcomp/BannaiIINTT17}, that a string of length $n$ contains less than $n$ runs.

On the other hand, maximal closed substrings with exponent smaller than $2$ are particular cases of \emph{maximal gapped repeats}~\cite{DBLP:journals/jda/KolpakovPPK17}. An $\alpha$-gapped repeat ($\alpha \geq 1$) in a string $\String$ is a substring $uvu$ of $\String$ such that $|uv| \leq  \alpha |u|$. It is maximal if the two occurrences of $u$ in it cannot be extended simultaneously with the same letter to the right nor to the left. Gawrychowski et al.~\cite{DBLP:journals/mst/GawrychowskiIIK18} proved that there are strings that have $\Theta(\alpha n)$ maximal $\alpha$-gapped repeats.
In particular, as shown by Brodal et al.~\cite{BLPS99}, the string $(aab)^{n/3}$ contains $\Theta(n^2)$ maximal gapped repeats. However, not all such maximal gapped repeats are MCSs; for example the prefix $a\cdot aba\cdot a$ is a maximal gapped repeat, but it is not an MCS since it is extended to the right into the closed substring $aabaab$. In fact, the MCSs in this string are the whole string and all the occurrences of $aa$, $b$, and $aba$.

\medskip

In this paper, we consider how many MCSs can a string of length $n$ contain and  algorithms to find all MCSs in a given string.

We describe an algorithm that, given a string of length $n$, locates all MCSs the string contains in 
$\mathcal O(n\log n)$ time, thus implying 
a string of length $n$ contains $\mathcal O(n \log n)$ MCSs.
An algorithm for binary strings was first sketched in~\cite{DBLP:conf/spire/Badkobeh0FP22}. Here we generalize the algorithm to strings on larger alphabets and provide a more thorough description and analysis.

\section{Preliminaries}

Let $\String=\String[1..n]=\String[1]\String[2]\cdots \String[n]$ be a string of $n$ letters drawn from a fixed alphabet $\Sigma$. The length $n$ of a string $\String$ is denoted by $|\String|$. The \emph{empty string}, denoted by $\varepsilon$, has length $0$. A \emph{prefix} (resp.~a \emph{suffix}) of $\String$ is any string of the form $\String[1..i]$ (resp.~$\String[i..n]$) for some $1\leq i \leq n$. 
A \emph{substring} of $\String$ is any string of the form $\String[i..j]$ for some $1\leq i \leq j \leq n$. It is also commonly assumed that the empty string is a prefix, a suffix and a substring of any string. 

An integer $p\geq 1$ is a \emph{period} of $\String$ if $\String[i]=\String[j]$ whenever $i\equiv j \pmod p$. For example, the periods of $\String=aabaaba$ are $3$, $6$ and every $n\geq 7 = |\String|$. 

Given a string $\String$, we say that a string $\beta\neq \String$ is a \emph{border} of $\String$ if $\beta$ is both a prefix and a suffix of $\String$ (we exclude the case $\beta=\String$ but we do consider the case  $\beta=\varepsilon$). Note that if $\beta$ is a border of $\String$, then $|\String|-|\beta|$ is a period of $\String$; conversely, if $p\leq |\String|$
is a period of $\String$, then $\String$ has a border of length
$|\String|-p$. Furthermore,
if a string has two borders $\beta$ and $\beta'$, with $|\beta|<|\beta'|$, then $\beta$ is a border of $\beta'$. 

Central to our algorithm for MCSs are the concepts of suffix tree and binary suffix tree~\cite{W73}, which we now define.

\begin{definition}[Suffix tree]
The suffix tree $T(S)$ of the string $S$ is the compressed trie of all suffixes of $S$. Each leaf in $T(S)$ represents a suffix $S[i..n]$ of $S$ and is annotated with the index $i$. We refer to the set of indices stored at the leaves in the subtree rooted at node $v$ as the leaf-list of $v$ and let $LL(v)$ denote it. Each edge in $T(S)$ is labeled with a nonempty substring of $S$ such that the path from the root to the leaf annotated with index $i$ spells the suffix $S[i..n]$. We refer to the substring of $S$ spelled by the path from the root to the node $v$ as the path-label of $v$, denoted by $\hat{v}$.
\end{definition}

The suffix tree can be constructed in $O(n)$ time for strings on linearly sortable alphabets and $O(n\log n)$ time for general alphabets, provided symbols can be compared in constant time~\cite{F97}.
Because all internal nodes of a suffix tree $T(S)$ have outdegree between two and $|\Sigma|$, we can turn a suffix tree $T(S)$ into a {\em binary suffix tree} $T_B(S)$ by replacing every node $v$ in $T(S)$ with out-degree $d > 2$ by a binary tree with $d-1$ internal nodes and $d-2$ internal edges in which the $d$ leaves are the $d$
children of node $v$. We label each new internal edge with the empty string so that the $d-1$ nodes replacing node $v$ all have the same path-label as node $v$ has in $T(S)$ (see~Fig.~\ref{fig:trees}).
Since $T(S)$ has $n$ leaves, constructing the binary suffix tree $T_B(S)$ requires adding at
most $n-2$ new nodes. Since each new node can be added in constant time, the binary
suffix tree $T_B(S)$ can be constructed from $T(S)$ in $O(n)$ time~\cite{BLPS99}.

\begin{figure}
    \centering
    \begin{tikzpicture}
      \node (hid1) at (0,0) {}
        child {
          node [draw,circle] {$\phantom iv_{\phantom i}$}
          edge from parent [dashed]
            child [solid] {
              node [draw,circle] {$v_a$}
              edge from parent
                node[above left] {\footnotesize $a$}
            }
            child [solid] {
              node [draw,circle] {$v_b$}
              edge from parent
                node[left] {\footnotesize $b$}
            }
            child [solid] {
              node [draw,circle] {$v_c$}
              edge from parent
                node[left] {\footnotesize $c$}
            }
            child [solid] {
              node [draw,circle] {$v_d$}
              edge from parent
                node[right] {\footnotesize $d$}
            }
            child [solid] {
              node [draw,circle] {$v_e$}
              edge from parent
                node[above right] {\footnotesize $e$}
            }
        }
        child [missing] {
          node {}
        }
        child [missing] {
          node {}
        };
      \node (hid2) at (8,1) {}
        child {
          node [draw,circle] {$\phantom iv_{\phantom i}$}
          edge from parent [dashed]
            child [solid] {
              node [draw,circle] {$\phantom v$}
                child {
                  node [draw,circle] {$v_a$}
                  edge from parent
                    node [above left] {\footnotesize $a$}
                }
                child {
                  node [draw,circle] {$v_b$}
                  edge from parent
                    node [above right] {\footnotesize $b$}
                }
              edge from parent
                node [above left] {\footnotesize $\varepsilon$}
            }
            child [missing] {
              node {}
            }
            child [solid] {
              node [draw,circle] {$\phantom v$}
                child {
                  node [draw,circle] {$v_c$}
                  edge from parent
                    node [above left] {\footnotesize $c$}
                }
                child {
                  node [draw,circle] {$\phantom v$}
                    child {
                      node [draw,circle] {$v_d$}
                      edge from parent
                        node [above left] {\footnotesize $d$}
                    }
                    child {
                      node [draw,circle] {$v_e$}
                      edge from parent
                        node [above right] {\footnotesize $e$}
                }
                  edge from parent
                    node [above right] {\footnotesize $\varepsilon$}
                }
              edge from parent
                node [above right] {\footnotesize $\varepsilon$}
            }
        }
        child [missing] {
          node {}
        }
        child [missing] {
          node {}
        };
      \draw[->] (2.5,-1.5) -- (3.3,-1.5);
    \end{tikzpicture}
    \caption{Example of a step for turning $T(\String)$ into $T_B(\String)$.}
    \label{fig:trees}
\end{figure}
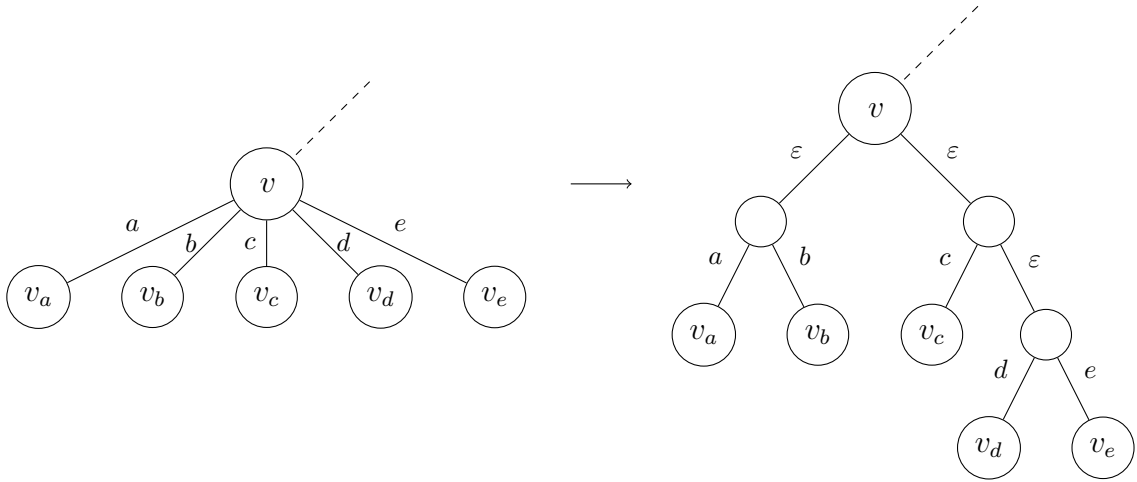

\section{An Algorithm for Locating All MCSs}

We now describe an algorithm for computing all the maximal closed substrings in a string $\String$ of length $n$. The algorithm uses the binary suffix tree $T_B(S)$ of the input string. The running time is $O(n\log n)$.
The inspiration for our approach is an algorithm for finding maximal pairs under gap constraints due to Brodal, Lyngs{\o}, Pedersen, and Stoye~\cite{BLPS99}.

At a high level, our algorithm for finding MCSs processes the binary suffix tree in a bottom-up traversal. At each node, the leaf lists of the (two, in a binary suffix tree) children are intersected. For each element in the leaf list of the smaller child, the successor in the leaf list of the larger child is found in a merge of the two trees. Note that because the element from the smaller child and its successor in the larger child come from different subtrees, they represent a pair of occurrences of the substring $\hat{v}$ that are right-maximal. To ensure left maximality, we must take care to only output pairs that have different preceding characters. We explain how to achieve this below.

Essential to our algorithm are properties of AVL trees that allow their efficient merging, and the so-called ``smaller-half trick'' applicable to binary trees. These proprieties are captured in the following lemmas.

\begin{lemma}[Brown and Tarjan~\cite{BT79}]
\label{lem:avlmerge}
Two AVL trees of size at most $n$ and $m$ can be merged in time $\mathcal O(\log \binom{n+m}{n})$.
\end{lemma}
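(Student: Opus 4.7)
The plan is to reconstruct a standard proof of this well-known result. Assume without loss of generality that $n \leq m$, and denote the two AVL trees by $T_1$ (of size $n$) and $T_2$ (of size $m$). The overall strategy is to insert the keys of $T_1$ into $T_2$ one at a time, in sorted order, using a \emph{finger-search} technique: maintain a pointer to the location of the previous insertion in $T_2$ so that the next target position can be located in time proportional to the logarithm of the rank-distance travelled, rather than paying $\Theta(\log m)$ afresh each time.

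To find the next key from its finger, I would walk up through ancestors of the finger until reaching a subtree whose key-range contains the target, and then descend to the appropriate leaf. A standard analysis on height-balanced trees shows that this costs $\mathcal O(\log d)$ time when the new position is at rank-distance $d$ from the old one, because the AVL invariant forces the height gained on the upward walk to be logarithmic in $d$. The subsequent insertion, together with amortized rebalancing rotations, adds only $\mathcal O(1)$ per operation.

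Letting $d_1,\ldots,d_{n+1}$ be the consecutive gap sizes induced by the inserted keys (so $\sum_i d_i = m + \mathcal O(1)$), the total merging cost is
\[
\mathcal O\!\left(\sum_{i=1}^{n+1} \log d_i\right).
\]
By concavity of $\log$, this sum is maximized when all $d_i$ are equal, and is therefore $\mathcal O\!\left(n \log\tfrac{n+m}{n}\right)$. Combining this with the standard estimate $\binom{n+m}{n} \leq (e(n+m)/n)^n$ gives $\log\binom{n+m}{n} = \Theta\!\left(n\log\tfrac{n+m}{n}\right)$ when $n \leq m$, so the merge cost matches the claimed bound.

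The main obstacle will be justifying the finger-search bound of $\mathcal O(\log d)$ on AVL trees: one must carefully count the ``level jumps'' along the ancestor walk and exploit the height-balance invariant to conclude that this walk has length $\mathcal O(\log d)$, rather than $\Theta(\log m)$. Once that is established, the amortized cost of AVL insertions is textbook material, and the comparison with $\log\binom{n+m}{n}$ reduces to a routine calculation via Stirling.
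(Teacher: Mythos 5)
The paper offers no proof of this lemma at all: it is imported verbatim from Brown and Tarjan's 1979 paper, so your reconstruction can only be measured against the literature. Your overall architecture is the right one and matches theirs in spirit --- insert the $n$ keys of the smaller tree into the larger one in sorted order, charge each insertion a cost logarithmic in the gap it skips, and finish with the concavity estimate $\sum_i \log d_i = \mathcal O\bigl(n\log\tfrac{n+m}{n}\bigr) = \mathcal O\bigl(\log\binom{n+m}{n}\bigr)$. That final calculation is correct.

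The genuine gap is the step you yourself flag, and it is worse than a matter of ``careful counting'': the per-operation finger-search bound of $\mathcal O(\log d)$ is simply false on a plain AVL tree. Take the finger at the rightmost position of the root's left subtree and the target one rank to its right, inside the root's right subtree: the rank distance is $d=1$, yet the ancestor walk you describe must climb all the way to the root, costing $\Theta(\log m)$ for that single search. Worst-case $\mathcal O(\log d)$ finger search requires an augmented structure (level-linked $(a,b)$-trees in the sense of Huddleston and Mehlhorn, or AVL trees with extra pointers), which changes the data structure the lemma is about. The repair Brown and Tarjan actually use is an aggregate analysis rather than a per-operation one: the total length of all upward walks telescopes against the total length of the downward walks, so one bounds $\sum_i(\mathrm{up}_i+\mathrm{down}_i)$ globally by $\mathcal O\bigl(n\log\tfrac{n+m}{n}\bigr)$ without ever asserting $\mathcal O(\log d_i)$ for an individual search. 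A similar remark applies to your $\mathcal O(1)$ amortized rebalancing claim: an AVL insertion performs at most one single or double rotation, but the balance-factor updates along the insertion path must also be folded into the same aggregate bound. So either switch to a level-linked representation and prove the finger bound there, or replace the per-step charge with the telescoping argument; as written, the key inequality $\mathrm{cost}_i=\mathcal O(\log d_i)$ does not hold.
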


\begin{lemma}[Brodal et al.~\cite{BLPS99}, Lemma 3.3]
\label{lem:smallerhalf}
Let $T$ be an arbitrary binary tree with $n$ leaves. The sum over all internal nodes $v$ in $T$ of terms that are $\mathcal O(\log \binom{n_1+n_2}{n_1})$, where $n_1$ and $n_2$ are the numbers of leaves in the subtrees rooted at the two children of $v$, is $\mathcal O(n \log n)$.
\end{lemma}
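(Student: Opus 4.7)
The plan is to massage $\log\binom{n_1+n_2}{n_1}$ into a per-node quantity that telescopes along the root-to-leaf paths of $T$. Stirling's approximation gives, uniformly in $n_1,n_2\geq 0$,
\[\log\binom{n_1+n_2}{n_1} = n_1\log\frac{n_1+n_2}{n_1} + n_2\log\frac{n_1+n_2}{n_2} + \mathcal O(\log(n_1+n_2)),\]
using the convention $0\log(\cdot/0)=0$ to cover the degenerate cases. Since $T$ has at most $n-1$ internal nodes and each $n_1+n_2\le n$, the error terms contribute only $\mathcal O(n\log n)$ in total, so it suffices to bound the main term.

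For each node $c$ of $T$ let $s_c$ denote the number of leaves in the subtree rooted at $c$, and let $p(c)$ denote its parent. Viewing each of the two main summands $n_i\log(s_v/n_i)$ as a charge on the corresponding child of $v$, the total contribution rewrites as
\[\sum_{v\ \text{internal}} \bigl[n_1(v)\log\tfrac{s_v}{n_1(v)}+n_2(v)\log\tfrac{s_v}{n_2(v)}\bigr] \;=\; \sum_{c\ne\text{root}} s_c \bigl(\log s_{p(c)}-\log s_c\bigr),\]
since every non-root node has a unique (internal) parent. Next, exchange the order of summation using the fact that $s_c$ equals the number of leaves in the subtree of $c$:
\[\sum_{c\ne\text{root}} s_c\bigl(\log s_{p(c)}-\log s_c\bigr) \;=\; \sum_{\ell\ \text{leaf}}\ \sum_{c\in\pi(\ell)}\bigl(\log s_{p(c)}-\log s_c\bigr),\]
where $\pi(\ell)$ denotes the set of non-root nodes on the path from $\ell$ to the root. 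For each fixed $\ell$, the inner sum telescopes to $\log s_{\text{root}}-\log s_{\ell} = \log n - \log 1 = \log n$, so the main term totals exactly $n\log n$, and the grand total is $\mathcal O(n\log n)$.

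The step that will require the most care is the Stirling estimate: I need an additive $\mathcal O(\log(n_1+n_2))$ error that holds uniformly in all admissible $n_1,n_2\geq 0$, including the boundaries where one argument vanishes and the binomial degenerates to $1$. Once that clean inequality is in hand, the rest is pure telescoping---in particular, the argument never needs to compare the two children's sizes, so the ``smaller half'' in the lemma's informal name is not actually invoked at this level of analysis.
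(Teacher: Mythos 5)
Your argument is correct, but note that the paper itself gives no proof of this lemma---it is imported verbatim from Brodal et al.~\cite{BLPS99}, so the comparison is with their proof rather than with anything in this paper. The standard argument there is a multiplicative telescoping: writing $\binom{n_1+n_2}{n_1}=\frac{s_v!}{s_{c_1}!\,s_{c_2}!}$ with $s_u$ the number of leaves below $u$, the product over all internal nodes collapses because each non-root internal node contributes $s_u!$ once to a numerator (at $u$ itself) and once to a denominator (at its parent), leaving exactly $\frac{n!}{\prod_{\ell}1!}=n!$; hence the sum of the logarithms is $\log n!=\mathcal O(n\log n)$ with no approximation at all. Your proof is essentially the additive shadow of this after applying Stirling: the entropy decomposition $n_1\log\frac{s_v}{n_1}+n_2\log\frac{s_v}{n_2}$ plus the per-leaf telescoping along root-to-leaf paths is a correct (and correctly executed) route, and your charging/exchange-of-summation steps all check out, with the main term coming to exactly $n\log n$ as you claim. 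What you pay for this is the need to control the Stirling error uniformly, including the degenerate boundaries $n_1\in\{0,n_1+n_2\}$---which, as you could note, never actually occur here since every internal node of the tree has two children each containing at least one leaf, so $n_1,n_2\geq 1$ throughout; alternatively the one-sided bound $\binom{m}{k}\leq 2^{mH(k/m)}$ gives your main-term inequality exactly, with no error term to sum. Your closing observation is also accurate: the bound is symmetric in $n_1,n_2$ and never compares the children's sizes; the ``smaller-half'' asymmetry only enters where the lemma is applied in the algorithm (searching from the smaller child's leaf list), not in the proof of the bound itself.
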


As stated above, our algorithm traverses the suffix tree bottom up.
We now describe the processing at a suffix tree node $v$ with path label $\hat{v}$. Recall that in the binary suffix tree, node $v$ from the original suffix tree is represented with a binary tree with $d$ leaves, where $d$ is the outdegree of the original suffix tree node. The processing described below results in the computation of all maximal closed substrings (MCSs) that have border $\hat{v}$.

At the start of the processing, for each child $v_a$ of $v$, we have an AVL tree the elements of which are the occurrence positions of the substring $\hat{v}_a$ in the string; in other words, the AVL tree for child $v_a$ contains $LL(v_a)$, the leaves in the subtree rooted at $v_a$ in the suffix tree. With each element $e$ in the AVL trees we associate two values: i) $c(e)$ is the position of a {\em candidate} right border so that $T[e..c(e)+|\hat{v}|]$ is potentially an MCS --- at the start of processing node $v$, $c(e) = 0$ for all $e$; and ii) $f(e)$ is a boolean flag indicating if $c(e)$ has ever been modified, initially false.

When processing an internal node of $v$'s binary tree, we are to merge two AVL trees $T_1$ and $T_2$, which contain, respectively, the leaf lists of the left and right child of the internal node. All elements in the AVL trees represent positions of occurrences of the substring $\hat{v}$ in the input string. In particular, observe that a pair of elements, one taken from $T_1$ and another from $T_2$, represents right maximal pair of occurrences of $\hat{v}$ because they are followed by different symbols.

We merge the two trees by merging the smaller one into the larger one so that the time spent merging is $O(n\log n)$ over the whole tree, according to Lemma~\ref{lem:avlmerge} and Lemma~\ref{lem:smallerhalf}. Without loss of generality, let $|T_1| \le |T_2|$. Before we actually merge the trees, we run a modified version of the merge algorithm as follows.

Let $c(e)$ denote the current candidate for position $e$. As stated above, initially $c(e) = 0$ for all $e$ and $c(e)$ is always stored at $e$'s AVL tree node. Additionally, we store a flag $f(e)$ at the node (initially false) indicating whether or not we have ever updated $c(e)$ in processing node $v$. The {\em first time} we ever set $c(e)$ when processing node $v$ (which we can tell from checking $f(e)$), we add $e$'s AVL tree node to a list $H$.

We process the elements $e_1 < e_2 < \ldots < e_{|T_1|}$ of $T_1$ in sorted order, which is easy because they are stored in an AVL tree. When the merge algorithm would insert one of the elements $e_i$ into $T_2$ we have the merge algorithm report the predecessor $p(e_i)$ and successor $s(e_i)$ of $e_i$ in $T_2$. 
This modification does not asymptotically increase the running time.
Now, consider $e_i$ and $s(e_i)$. If $e_{i+1} < s(e_i)$ then 
$T[e_i..s(e_i)+|\hat{v}|]$ is not a closed substring and we do not modify $c(e_i)$. Otherwise, if $s(e_i) < c(e_i)$ then we have determined $T[e_i..c(e_i)+|\hat{v}|]$ is not in fact closed, so we set $c(e_i) = 0$. However, $T[e_i..s(e_i)+|\hat{v}|]$ may be closed and if in addition we have $T[e_i-1] \ne T[s(e_i)-1]$ then the substrings are also left maximal, so we set candidate $c(e_i) = s(e_i)$\footnote{The processing for $p(e_i)$ and $e_i$ is symmetric.}.

The number of times $c(e_i)$ is updated is bound by the work done merging $T_1$ into $T_2$ (it is constant for each element of $T_1$). This also bounds the length of list $H$ containing the AVL tree nodes of elements that have had their candidates updated. When we are finished processing node $v$, we scan $H$ to output the MCSs having border $\hat{v}$. In particular, if the node for element $e$ is in $H$ and $c(e) \ne 0$ then we output $T[e..c(e)+|\hat{v}|]$ as an MCS. We then scan $H$ again to reset modified $c(e)$ values to 0, and reset flags to false at corresponding AVL tree nodes. We then empty $H$.

By Lemmas~\ref{lem:avlmerge} and~\ref{lem:smallerhalf}, the time to process the whole tree is bounded by $\mathcal O(n\log n)$. 
Thus, our algorithm is an implicit proof for an upper bound of $\mathcal O(n \log n)$ for the number of MCSs in a string of length $n$.

\section{Bounds on the Number of MCSs}

From the bound of our algorithm, we have the following

\begin{theorem}
A string of length $n$ contains $\Oh(n \log n)$ MCSs.
\end{theorem}

Runs of single letters are clearly MCSs. On the other hand, it is known that a string of length $n$ has less than $n$ runs, hence less than $n$ MCSs with exponent $2$ or larger. 

Take the string $(aab)^{n/3}$ considered in the Introduction. Its MCSs are the whole string and all the occurrences of $aa$, $b$, and $aba$. Therefore, this string contains $\Omega(n)$ MCSs. In particular, it contains $\Omega(n)$ MCSs that are neither runs nor occurrences of single letters, i.e., MCSs with exponent $\eta$ such that $1<\eta<2$.

We were not able to find a string with $\omega(n)$ MCSs. If such a string exists, then it must contain $\omega(n)$ MCSs that are neither runs nor occurrences of single letters.

One could be tempted to prove that at each position, only $O(\log n)$  MCSs start/end. However, this is not  true in general. For instance, a string can have $\Omega(n^{1/2})$ MCSs starting at the same position, as the following example shows:

 \begin{example}
     Let $\overline{a}=b$ and $\overline{b}=a$. Consider the sequence of strings defined by $\String_1=a$ and 
     \[\String_{k+1}=\String_{k} \overline{\String_{k}[k]}\String_{k}[1..k]\] 
     for $k\geq 1$. So, $\String_2=aba$,  $\String_3=abaaab$,  $\String_4=abaaabbaba$, etc.  The length of $\String_k$ is $k(k+1)/2$.
     Every closed prefix of $\String_k$ is an MCS. The closed prefixes of $\String_k$ are precisely the strings $\String_i$, for $i\leq k$.
 \end{example}

\section{Concluding Remarks}
This paper has studied maximal closed substrings (MCSs), a form of repetitions that generalize runs. Our main results have been to show that the maximum number of MCSs in a string is somewhere between $\Omega(n)$ and $O(n\log n)$. A tighter analysis of this maximum is the main open problem we leave, together with the problem of obtaining a constructive proof of the $O(n\log n)$ upper bound (as opposed to the non-constructive one implied by our algorithm). 

\subparagraph*{Acknowledgements}

Gabriele Fici is supported by MIUR project PRIN 2022 APML – 20229BCXNW. Simon J. Puglisi is supported by the Academy of Finland via grant 339070. 

\bibliographystyle{plain}

\end{document}